\newenvironment{psmallmatrix}
  {\left[\begin{smallmatrix}}
  {\end{smallmatrix}\right]}
\newcommand{\liren}[1]{{\color{black}#1}}
\newcommand{\lcss}[1]{{\color{black}#1}}
\newcommand{\revise}[1]{{\color{black}#1}}
\newcommand{\hl}[1]{{\color{black}#1}}
\newtheoremstyle{theoremdd}
  {\topsep}
  {\topsep}
  {}
  {0pt}
  {\bfseries}
  {.}
  { }
  {\thmname{#1}\thmnumber{ #2}\thmnote{ (#3)}}
\theoremstyle{theoremdd}
\newtheorem{defn}{Definition}
\newtheorem{prop}{Proposition}
\newtheorem{lem}{Lemma}
\title{\LARGE \bf
Scalable Zonotopic Under-approximation of Backward Reachable Sets for Uncertain Linear Systems
}
\author{ Liren Yang \hspace{1cm} Necmiye Ozay
\thanks{The authors are with the Dept. of
       EECS,
       Univ. of Michigan, Ann Arbor, MI 48109
      {\tt\small \{yliren, necmiye\}@umich.edu}. 
}
}
\begin{document}
\maketitle
\thispagestyle{empty}
\pagestyle{empty}

\begin{abstract}
Zonotopes are widely used for over-approximating \textit{forward} reachable sets of uncertain linear systems for verification purposes. 
In this paper, we use zonotopes to achieve more scalable algorithms that under-approximate \textit{backward} reachable sets of uncertain linear systems for control design. The main difference is that the backward reachability analysis is a two-player game and involves Minkowski difference operations, but zonotopes are not closed under such operations. 
We under-approximate this Minkowski difference with a zonotope, which can be obtained by solving a linear optimization problem. We further develop an efficient zonotope order reduction technique to bound the complexity of the obtained zonotopic under-approximations. The proposed approach is evaluated against existing approaches using randomly generated instances and illustrated with several examples.
\end{abstract}

\section{Introduction}

For autonomous control systems, the control objectives need to be achieved robustly against system uncertainties. 
Central to many control synthesis techniques for uncertain systems is backward reachability analysis. 
Given an uncertain control system and a set $X_0$ of target states, the backward reachable set (BRS) consists of the states that can be steered into $X_0$ in finite time, regardless of the system uncertainties. 
Being able to compute such sets is important to design controllers with safety or reachability objectives \cite{bertsekas1972infinite,lygeros1999controllers}, and is one building block for achieving more complicated control tasks 
\cite{chen2018signal}. 
Whenever the exact computation is hard, an under-approximation can be still used to define a conservative control law. 
A variety of approaches exists in the literature, including polyhedral computation \cite{blanchini2008set}, interval analysis \cite{li2017invariance}, HJB method \cite{mitchell2005time} and polynomial optimization \cite{lasserre2015tractable}, just to name a few. 
For linear dynamics with linear constraints, polyhedra can be used to represent the BRSs as they are closed under linear transformation, Minkowski addition and subtraction, and can be computed leveraging linear optimization tools. However, it is limited to low dimensional systems (typically, state dimension $\leq 4$)  due to an expensive quantifier elimination step.

One related problem is the forward reachability analysis, where we deal with uncertain system with \textit{no control inputs} (e.g., closed-loop systems), and compute the set of states that can be visited in the future from some initial state in a given set $X_0$. 
Such forward reachable sets can be computed offline for verification and online for state prediction \revise{\cite{althoff2021set}}. 
Often times, the forward reachable sets are overestimated for robustness. 
For linear systems, a special polyhedron called zonotope is widely used to represent forward reachable sets thanks to the favorable complexity of applying linear transformations (for forward state evolution) and Minkowski additions (to account for additive uncertainty) to zonotopes (see, e.g., \cite{althoff2015introduction,girard2005reachability}). 
Algorithms that compute zonotopic forward reachable sets are more scalable than those dealing with general polyhedra.

One natural question is: for \textit{uncertain} linear dynamics, is there a way to reverse the time so that the efficient zonotopic set computation for forward reachability analysis can be directly adopted to compute BRSs? Unfortunately, this is not the case. The main reason is that there lacks a meaningful notion of two-player game in forward reachability analysis. 
In the forward case, there is only one player (i.e., the environment) picking the initial state and the system uncertainty, whereas in the backward case, there are two players (i.e., the controller and the environment) picking the control input and the uncertainty in turn \revise{(see Section 4.2 of \revise{\cite{mitchell2007comparing}})}. Particularly, the existence of the environment player leads to a Minkowski subtraction step in the sequential BRS computation, but zonotopes are not closed under Minkowski subtraction \cite{althoff2015computing}. 
\revise{Therefore, combining the idea of time-reversing and efficient computational tools for forward reachability (e.g., based on  zonotopes \cite{liebenwein2020compositional}, \cite{han2016enlarging} or polynomial zonotopes \cite{kochdumper2020computing})} were explored only for \textit{deterministic} systems, but using zonotopes for \textit{uncertain} systems' backward reachability, to the best of our knowledge, is still missing.


In this paper, we use zonotopes to represent and compute BRSs for uncertain linear systems. 
The key ingredient is an efficient way to under/over-approximate the Minkowski difference of two zonotopes by solving convex optimization problems. 
While the under approximation allows us to efficiently compute a subset of the BRS \lcss{without polyhedral projection}, the over-approximation can be used to quantify how conservative this subset is.  
Different from \cite{althoff2015computing}, which manipulates a hyperplane-representation, our approach only deals with the generator-representations of zonotopes, and hence is more efficient and suitable for sequential computation, but at the cost of accuracy. The accuracy issue, however, is mitigated by the fact that our subtrahend zonotope represents the impact of uncertainties and is usually small compared to the minuend zonotope. Moreover, \cite{althoff2015computing} does not guarantee if the approximation is an inner one or an outer one.  
\lcss{We also leverage the linear encoding of zonotope-containment problems \cite{sadraddini2019linear} and derive an alternative approach for under-approximating the Minkowski difference between zonotopes. 
Theoretical analysis and experiments show that our approach scales differently from this alternative.  
}
In order to upper bound the complexity of each step of the computation, we further present a way to reduce the order of the obtained zonotopic BRSs. 
Zonotope order reduction is extensively studied (e.g., see \cite{kopetzki2017methods}, \cite{yang2018comparison} and the references therein), but our approach is different: we search for a lower order zonotope \textit{enclosed} by the given zonotope, whereas existing techniques, focusing on forward reachability analysis, all look for outer approximations. 
Our approach is evaluated with randomly generated zonotopes with different dimensions and orders, and its efficacy is illustrated with several examples. 

\section{Preliminaries}
Let $G = [g_1, g_2, \dots g_N] \in \mathbb{R}^{n\times N}$ be a set of \text{generators}, and $c \in \mathbb{R}^{n}$ be a \text{center vector}. A \text{zonotope} $Z$ with generator-representation (or G-rep) $(G, c)$ is defined to be the set
$\big\{c + \sum_{i=1}^N \theta_i g_i \mid \theta_i \in [-1,1], \ i = 1,2,\dots N\big\}$. 
With a slight abuse of notation, we will write $Z = (G, c)$. 
Let $H \in \mathbb{R}^{L\times n}$ and $h \in \mathbb{R}^L$, a \textit{polyhedron} with hyperplane-representation (or H-rep) $(H,h)$  is the set $\{x \in \mathbb{R}^n \mid Hx \leq h\}$. If polyhedron $X$ is bounded, $X$ is called a \text{polytope}. 
A set $V= \{x_1, x_2 \dots, x_M\}\subseteq \mathbb{R}^{n}$ is called a 
vertex-representation (or V-rep) of a polytope $X$ if $X$ is the convex hull of $V$, i.e., $X  = \text{cvxh}(V) : = \big\{ \sum_{j=1}^M \lambda_j x_j \mid \sum_{j=1}^M \lambda_j = 1, \lambda_j \in [0,1], j = 1,2,\dots, M\big\}$, \revise{where $\text{cvxh}$ denotes the convex hull}.  Let $A \in \mathbb{R}^{L \times n}$ and $X \subseteq \mathbb{R}^n$ be a set, $AX$ denotes the set $\{Ax \mid x \in X\}$. 

Let $X, Y \subseteq \mathbb{R}^n$ be two sets, the \text{Minkowski sum} of $X$ and $Y$, denoted by $X \oplus Y$, is  the set $\{x + y\mid x \in X, y \in Y\}$. 
Whenever $X = \{x\}$ is a singleton set, we will write $x + Y$ for $X \oplus Y$. 
The \text{Minkowski difference} of $X$ and $Y$, denoted by $X\ominus Y$, is defined to be $\{z \in \mathbb{R}^n \mid z + Y\subseteq X\}$. 
For the Minkowski arithmetics, we assume that the operations are done in order from left to right, except as specified otherwise by parentheses.  The following lemmas will be useful. 

\revise{
\begin{lem}\label{lem:Min}
Let $X, Y, Z \subseteq \mathbb{R}^n$. 
\begin{itemize}[nolistsep]
    \item[i)] [\cite{li2019robustly}, Proposition 3.1, \cite{yang2020efficient}, Lemma 4] $X \ominus Y \oplus Z \subseteq X \oplus Z \ominus Y$, particularly, $X \ominus Y \oplus Y \subseteq X \subseteq X \oplus Y \ominus Y$.  
    \item[ii)] [From \cite{montejano1996some}] If $X$, $Y$ and $Z$ are convex, compact and nonempty, then $X \oplus Z = Y \oplus Z$ implies that $X = Y$. 
    \item[iii)] If $X$, $Y$ are convex, compact and  nonempty, then $X \oplus Y \ominus Y = X$. 
\end{itemize}
\end{lem}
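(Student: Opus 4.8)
The plan is to combine parts i) and ii) of the lemma and never touch support functions directly. Write $W := X \oplus Y \ominus Y$. Part i) already hands me one inclusion for free, namely $X \subseteq X \oplus Y \ominus Y = W$, so the whole content is the reverse inclusion $W \subseteq X$, which I intend to extract from the cancellation law in part ii) applied to the pair $(W, X)$.

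The key intermediate claim is $W \oplus Y = X \oplus Y$. One direction is immediate from the definition of Minkowski difference: every $z \in W = (X \oplus Y) \ominus Y$ satisfies $z + Y \subseteq X \oplus Y$, hence $W \oplus Y \subseteq X \oplus Y$. The other direction, $X \oplus Y \subseteq W \oplus Y$, follows by monotonicity of $\oplus$ from the inclusion $X \subseteq W$ established above. So $W \oplus Y = X \oplus Y$, and once the hypotheses are checked, part ii) forces $W = X$.

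Before invoking part ii) I must verify its hypotheses for the triple $(W, X, Y)$. The set $X$ and $Y$ are convex, compact and nonempty by assumption, so only $W$ needs attention. Nonemptiness is clear since $X \subseteq W$ and $X \neq \emptyset$. For convexity and compactness I would use the elementary representation $A \ominus B = \bigcap_{b \in B} (A - b)$: taking $A = X \oplus Y$, which is convex and compact, exhibits $W$ as an intersection of convex closed sets, hence convex and closed, and $W$ is contained in the bounded set $(X \oplus Y) - b_0$ for any fixed $b_0 \in Y$, hence bounded; therefore $W$ is convex and compact. Applying part ii) to $W \oplus Y = X \oplus Y$ then yields $W = X$, which is the claim.

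I do not anticipate a genuine obstacle: the argument is a short chain of set inclusions plus one application of the cancellation law. The only mildly delicate point is confirming that the Minkowski difference $W$ inherits convexity and compactness so that ii) is legitimately applicable; everything else is routine.
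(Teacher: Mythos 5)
Your proof is correct and follows essentially the same route as the paper's: establish $(X \oplus Y \ominus Y) \oplus Y = X \oplus Y$ and then cancel $Y$ using part ii). The only difference is that you explicitly verify the convexity/compactness/nonemptiness hypotheses of ii) for $X \oplus Y \ominus Y$, which the paper's one-line proof leaves implicit; that is a welcome bit of extra care, not a different argument.
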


\begin{proof}
To prove iii), note that, by i), $X \oplus Y \ominus Y \oplus Y = X \oplus Y$. Then applying item ii) yields $X \oplus Y \ominus Y = X$. 
\end{proof}
}

\begin{lem} \ [\liren{From \cite{girard2005reachability}}] Let $Z = (G, c) \subseteq \mathbb{R}^n$ be a zonotope. 
\begin{itemize}[nolistsep]
    \item[i)] $Z = \bigoplus_{i=1}^N Z_i$  where $Z_i = \big(\{g_i\}, c_i\big)$ s.t. $\sum_{i=1}^N c_i = c$.  
    \item[ii)]  Let $A \in \mathbb{R}^{L\times n}$, $AZ = \big(AG, Ac)$. 
    \item[iii)]  Let $Z' = (G', c')$, $Z \oplus Z' = ([G, G'], c + c')$. 
\end{itemize}
\end{lem}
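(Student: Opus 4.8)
The plan is to verify all three claims directly from the definition of a zonotope $(G,c)$ as the image of the box $[-1,1]^N$ under the affine map $\theta \mapsto c + \sum_{i=1}^N \theta_i g_i$, using only linearity of matrix multiplication and the fact that Minkowski addition amounts to summing parameter contributions that range independently. I would actually establish ii) and iii) first, since they are immediate, and then deduce i) from iii).

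For ii), I would start from $AZ = \{Ax \mid x \in Z\}$, write a generic $x \in Z$ as $c + \sum_{i=1}^N \theta_i g_i$ with $\theta_i \in [-1,1]$, and push $A$ through by linearity to get $Ac + \sum_{i=1}^N \theta_i (Ag_i)$. Since $AG = [Ag_1, \dots, Ag_N]$, this is exactly the generic element of the zonotope $(AG, Ac)$, and both inclusions hold simultaneously because the parameter vector $\theta$ is untouched. For iii), I would expand $Z \oplus Z' = \{x + x' \mid x \in Z,\ x' \in Z'\}$ and substitute the G-reps of the two summands, obtaining $(c + c') + \sum_{i} \theta_i g_i + \sum_{j} \theta'_j g'_j$ where $\theta \in [-1,1]^N$ and $\theta' \in [-1,1]^{N'}$ range independently; this is precisely the G-rep element of $\big([G, G'],\, c + c'\big)$, so the two sets coincide.

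For i), note that each $Z_i = (\{g_i\}, c_i)$ is the segment $\{c_i + \theta_i g_i \mid \theta_i \in [-1,1]\}$. Iterating item iii) $N-1$ times — equivalently, using that Minkowski addition is associative (inherited from associativity of vector addition, and consistent with the left-to-right convention fixed in the preliminaries) — the sum $\bigoplus_{i=1}^N Z_i$ is the zonotope with center $\sum_{i=1}^N c_i = c$ and generator matrix $[g_1, \dots, g_N] = G$, i.e.\ it equals $(G, c) = Z$.

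I do not expect any substantial obstacle here: the only point that deserves an explicit line is that the iterated Minkowski sum in i) is well defined and associative, so that it genuinely collapses to the single parametrized set $\{c + \sum_{i=1}^N \theta_i g_i \mid \theta \in [-1,1]^N\}$ rather than depending on the bracketing; everything else follows from linearity of $x \mapsto Ax$ and from simply reading off the parametrization of the resulting set.
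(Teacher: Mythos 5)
Your proof is correct: all three items follow exactly as you say from the parametrization $Z=\{c+\sum_{i=1}^N\theta_i g_i \mid \theta\in[-1,1]^N\}$, linearity of $x\mapsto Ax$, and the independence of the parameters in a Minkowski sum, with i) obtained by iterating iii). The paper itself gives no proof for this lemma (it is simply cited from the reference on zonotope reachability), so your direct verification is the natural argument and fills in precisely what the citation leaves implicit.
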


\section{Backward Reachable Sets}
Consider a discrete-time system in the following form: 
\begin{align}
    x_{t+1} = A x_t + B u_t + E w_t + K, 
    \label{eq:sys}
\end{align}
where $x \in \mathbb{R}^{n_x}$ is the state, $u \in U \in \mathbb{R}^{n_u}$ is the control input and $w \in W  \in \mathbb{R}^{n_w}$ is the disturbance input. 
Given a set $X_0$ of target states, we want to compute (or to under-approximate, if exact computation is hard) the $k$-step backward reachable set $X_k$ of set $X_0$, defined recursively as  
\begin{align}
    \hspace{-1.5mm}X_{k+1} & =  \{x\in \mathbb{R}^{n_x}  \mid \exists u \in U: \forall w \in W:  \nonumber \\
& \ \ \ Ax + Bu + Ew + K\in X_k\},  \ \ k = 0,1,2 \dots \label{eq:cpre}
\end{align}
Set $X_k$ contains the states from where it is possible to reach the target set $X_0$ in \textit{exactly} $k$ steps. 
A weaker definition of the $k$-step BRS would require $X_0$ to be reached in \textit{no more than} $k$ steps, whose formal definition is similar to Eq. \eqref{eq:cpre} except for an extra ``$\cup X_k$'' at the end of the formula. 
Here, we adopt the stronger definition in Eq. \eqref{eq:cpre} for simplicity because the union operation may lead to non-convex sets.    
There exists slightly different notions of reachable sets \cite{kurzhanskiy2011reach}, depending on the order of the quantifiers. We will focus on under-approximating the set defined by \eqref{eq:cpre} while our approach applies in general. 

Suppose that set $U$, $W$, and $X_0$ are polytopes, and that the H-rep of $U$, $X_0$ and the V-rep of $W$ is known, one can compute $X_k$ as a polytope in H-rep, i.e., 
\begin{align}
    \hspace{-2.5mm}X_{k+1} &  = \textbf{Proj}_x\big(\{x\in \mathbb{R}^{n_x}, u \in U \mid\forall w_j \in V_W:  \nonumber \\ & \ \ \ Ax + Bu + Ew_j \in X_k\}\big), \ k = 0,1,2\dots, 
\end{align}
where $\textbf{Proj}_x(S) = \{x\mid \exists u: (x,u)\in S\}$ is the projection operation \revise{(e.g., see \cite{smith2016interdependence}, Proposition 1)}. 
However, polytope projection is time-consuming, which limits the use of this approach to low dimensional systems (typically $n_x \leq 4$). 

In this paper, we consider under-approximating the BRSs of system \eqref{eq:sys} instead 
under the following assumptions. 
\begin{itemize}[nolistsep]
    \item[A1.] The target set is a zonotope (denoted by $Z_0$ hereafter), whose G-rep is known.
    \item[A2.] The disturbance set $W$ is a polytope, whose H-rep $(H,h)$ and V-rep $V$ are both known.
    \item[A3.] Matrix $A \in \mathbb{R}^{n_x \times n_x}$ is invertible. This assumption is true whenever Eq. \eqref{eq:sys} is obtained by time-discretizing an underlying continuous-time linear dynamics. 
\end{itemize}
Finding under-approximation of BRSs is useful in control problems with reachability objectives and  falsification problems against safety requirements \cite{chou2018using}.

\section{Solution Approach}
We explore the use of zonotopes in under-approximating the BRS $X_k$. This is based on i) the modest computational complexity of operations on zonotopes such as Minkowski addition and affine transformation, and ii) the fact that   \eqref{eq:cpre} can be re-written as follows using Minkowski arithmetic \revise{\cite{kurzhanskiy2011reach}}: 
\begin{align}
    X_{k+1} = \{x \in \mathbb{R}^{n_x} \mid Ax \in X_k \ominus EW \oplus -BU - K\}. \label{eq:cpreM}
\end{align}
In Eq. \eqref{eq:cpreM}, if $W = \{0\}$ and the term ``$\ominus EW$" were not there, then one could show inductively that, under assumption A1-A3, $X_{k+1}$ is a zonotope whose G-rep can be easily computed from the G-reps of $X_k$ and $U$ after Minknowski addition and linear transformation. 
Whenever $W$ is not a singleton set, the key step is to efficiently under and over approximate the Minkowski difference in Eq. \eqref{eq:cpreM} with zonotopes in their G-reps. Whereas the former leads to an inner approximation of $X_{k+1}$, 
the latter one can be used to quantify the conservatism of this inner approximation.  

\subsection{Zonotopic Inner/Outer Approximation of $Z\ominus EW$}
Let $Z = (G, c) \subseteq \mathbb{R}^{n_x}$ be a zonotope, where $G = [g_1, g_2, \dots, g_N]$. We formulate two optimization problems, one computes a zonotopic outer approximation $\overline{\mathfrak{Z}}(Z, EW)$, and the other computes a zonotopic inner approximation $\underline{\mathfrak{Z}}(Z, EW)$, of set $EW$ using $Z$ as a ``template''. The obtained outer/inner approximation are also in G-reps. Particularly, their generators are scaled versions of $Z$'s generators, i.e., in the form of $\alpha_i g_i$ for some $\alpha_i \in [0,1]$ (see Definition \ref{defn:align}). We then show that the Minkowski difference $Z \ominus \overline{\mathfrak{Z}}(Z, EW)$ and $Z \ominus \underline{\mathfrak{Z}}(Z, EW)$ can be done  element-wise via generator subtraction. This leads to an efficient way to inner/outer approximate $Z \ominus EW$ with zonotopes in G-reps.  
This technique will become our key ingredient of BRS under-approximation.

\begin{defn}\label{defn:align}
Let $Z = (G, c)$ and $Z'= (G', c')$ be zonotopes. $Z'$ is \textit{aligned with} $Z$ if $G = [g_1, g_2, \dots, g_N]$ and $G'  =[\alpha_1 g_1, \alpha_2 g_2, \dots, \alpha_N g_N]$ for some $\alpha_i \in [0,1]$. 
\end{defn}

\subsubsection{Outer approximation of $EW$}  
Consider the following linear programming problem: 
\begin{align}
    \hspace{-5mm} \begin{array}{rl}
        \min_{\theta, \, \alpha, \, c}  &  \sum_{i=1}^N b_i \alpha_i \\
        \text{s.t.} & \forall w_j \in V: c + \sum_{i=1}^N \theta_{ij} g_i = E w_j \\
       		   & \ \ \ |\theta_{ij}| \leq \alpha_i \leq 1, \ i = 0,1,\dots N \\
    \end{array}, 
    \tag{min-out}
    \label{eq:minout}
\end{align}
where $b_i >0$ are constants and $\theta$ and $\alpha$ are vectors aggregated from $\theta_{ij}$ and $\alpha_i$ respectively.  
The V-rep $V$ of the disturbance set $W$, which is available by Assumption A2, is used to formulate the above problem. Let $N$ be the number of generators in the template zonotope $Z$, $n_x$ be the dimension of the ambient space, and $M$ be the number of vertices in $V$. In the optimization problem \eqref{eq:minout},  
\begin{align}
\begin{array}{rl}
    \# \text{variables}  & \hspace{-2mm} = \mathcal{O}(MN + n_x), \\
    \# \text{constraints} & \hspace{-2mm}= \mathcal{O}\big(M(N + n_x)\big). 
\end{array}
\label{eq:bigO1}
\end{align}

\begin{prop}\label{prop:Zminout}
Let $(\theta, \overline{\alpha}, \overline{c})$ be the minimizer of the optimization problem \eqref{eq:minout}. Define $\overline{\mathfrak{Z}}(Z, EW) = \big([\overline{\alpha}_1 g_1, \overline{\alpha}_2 g_2, \dots, \overline{\alpha}_N g_N ], \overline{c}\big)$. 
We have $EW \subseteq \overline{\mathfrak{Z}}(Z, EW)$.
\end{prop}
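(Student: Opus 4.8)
The plan is to show that every point of $EW$ lies in the zonotope $\overline{\mathfrak{Z}}(Z, EW)$ by exhibiting, for each such point, an explicit choice of zonotope coefficients in $[-1,1]$ witnessing membership. The feasibility of the optimization problem \eqref{eq:minout} supplies exactly these coefficients at the optimum.

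First I would recall that since $W$ is a polytope with V-rep $V = \{w_1, \dots, w_M\}$, an arbitrary point of $EW$ can be written as $Ew = \sum_{j=1}^M \lambda_j (E w_j)$ with $\lambda_j \in [0,1]$ and $\sum_j \lambda_j = 1$; this uses linearity of $E$ and $EW = E\,\mathrm{cvxh}(V) = \mathrm{cvxh}(EV)$. Next, I would invoke the constraints of \eqref{eq:minout} evaluated at the minimizer $(\theta, \overline{\alpha}, \overline{c})$: for each vertex $w_j$ we have $\overline{c} + \sum_{i=1}^N \theta_{ij} g_i = E w_j$ with $|\theta_{ij}| \le \overline{\alpha}_i \le 1$. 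Taking the convex combination with weights $\lambda_j$, and using $\sum_j \lambda_j = 1$ to absorb the center term, gives $Ew = \overline{c} + \sum_{i=1}^N \big(\sum_{j=1}^M \lambda_j \theta_{ij}\big) g_i$. Define $\theta_i^\star := \sum_{j=1}^M \lambda_j \theta_{ij}$; then $|\theta_i^\star| \le \sum_j \lambda_j |\theta_{ij}| \le \sum_j \lambda_j \overline{\alpha}_i = \overline{\alpha}_i$, so $\theta_i^\star \in [-\overline{\alpha}_i, \overline{\alpha}_i]$. Writing $\theta_i^\star = \overline{\alpha}_i \mu_i$ with $\mu_i \in [-1,1]$ (taking $\mu_i = 0$ when $\overline{\alpha}_i = 0$), we get $Ew = \overline{c} + \sum_{i=1}^N \mu_i (\overline{\alpha}_i g_i)$, which is precisely a point of $\overline{\mathfrak{Z}}(Z, EW) = \big([\overline{\alpha}_1 g_1, \dots, \overline{\alpha}_N g_N], \overline{c}\big)$. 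Since $w \in W$ was arbitrary, $EW \subseteq \overline{\mathfrak{Z}}(Z, EW)$.

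The only genuine content is the convexity/interpolation argument: the defining equalities of \eqref{eq:minout} are imposed only at the vertices $w_j$, so one must pass from containment of the vertex images to containment of the whole set $EW$, and simultaneously check that the interpolated coefficients still respect the bound $\overline{\alpha}_i$. Both follow from the triangle inequality and $\sum_j \lambda_j = 1$, so I do not anticipate a real obstacle; the main thing to be careful about is the degenerate case $\overline{\alpha}_i = 0$, handled by setting the corresponding $\mu_i = 0$. (Note the statement does not even require $\overline{\mathfrak{Z}}(Z,EW)$ to be a tight or aligned outer approximation — alignment with $Z$ is immediate from the form $\overline{\alpha}_i g_i$ with $\overline{\alpha}_i \in [0,1]$, per Definition \ref{defn:align} — only the inclusion, so the argument is complete at this point.)
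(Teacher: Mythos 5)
Your proof is correct and follows essentially the same route as the paper: use the LP constraints at the minimizer to place each vertex image $Ew_j$ in $\overline{\mathfrak{Z}}(Z,EW)$, then pass to all of $EW = \mathrm{cvxh}(EV)$ by convexity. The only difference is that you unwind the convexity step explicitly (constructing the interpolated coefficients $\theta_i^\star$ and checking $|\theta_i^\star|\le\overline{\alpha}_i$, including the $\overline{\alpha}_i=0$ case), whereas the paper simply invokes convexity of the zonotope.
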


\begin{proof}
By the conditions in \eqref{eq:minout}, for any $i$ and $w_j \in V$, there exist $\theta_{ij} \in [-\overline{\alpha}_i,\overline{\alpha}_i]$ 
s.t. $Ew_j = \overline{c} + \sum_{i=1}^N \theta_{ij}g_i $. 
Equivalently, there exist $\theta_{ij}' \in [-1,1]$ s.t. $Ew_j = \overline{c} + \sum_{i=1}^N \theta_{ij}' \overline{\alpha}_i g_i $. 
Hence $EV \subseteq \overline{\mathfrak{Z}}(Z, EW) = \big([\overline{\alpha}_1 g_1, \overline{\alpha}_2 g_2, \dots, \overline{\alpha}_N g_N], \overline{c}\big)$. 
It then follows that  $EW = E \text{cvxh}(V) = \text{cvxh}(EV)\subseteq \overline{\mathfrak{Z}}(Z, EW)$ from the convexity of zonotope $\overline{\mathfrak{Z}}(Z, EW)$. 
\end{proof}

In general, there does not exist a unique minimal (in the set inclusion sense) zonotopic outer approximation of $EW$ that aligns with the template zonotope $Z$. We hence minimize a weighted sum of $\alpha_i$'s. The weights $b_i > 0$ can be used for heuristic design to incorporate prior knowledge of disturbance set $W$. For example, when $W$ is a hyper-rectangle and $E \in \mathbb{R}^{n_x \times n_w}$ is full rank, we use  
$b_i = \Vert Tg_i\Vert_1 - \Vert T g_i \Vert_\infty$, 
where $T = (E^\top E)^{-1}E$ when $n_x \geq n_w$ and $T =E^\top (E E^\top)^{-1}$ otherwise. 
The idea is to encourage using generators that closely align with vector $Ee_p$, where $e_p$ is the $p^{\rm th}$ natural basis of vector space $\mathbb{R}^{n_w}$.  A similar criteria 
was used for zonotope order reduction in \cite{girard2005reachability}.

\subsubsection{Inner approximation of $EW$}  

Consider the following optimization problem:  
\begin{align}
     \begin{array}{rl}
        \max_{\alpha, \, c}  &  \textstyle\sum_{i=1}^N d_i \log(\alpha_i) \\ 
       \text{s.t.} & H c + |HG| \alpha\leq h\\
       & 0 \leq \alpha \leq 1 
    \end{array}, 
    \tag{max-in}
    \label{eq:maxin}
\end{align}
where \revise{$d_i \geq 0$ are constants} and 
$|HG|$ is a matrix obtained by taking element-wise absolute value of matrix $HG$. The H-rep $(H, h)$ of the disturbance set $W$, which is available by Assumption A2, is used to formulate the above problem. 
Suppose that $H$ has $L$ rows. 
In  \eqref{eq:maxin}, we have 
\begin{align}
\begin{array}{rl}
    \# \text{ variables} &  \hspace{-2mm} = \mathcal{O}(N + n_x),  \\
    \# \text{ constraints}&  \hspace{-2mm}= \mathcal{O}(N+L). 
\end{array}
\end{align}

\begin{prop}
Let $(\underline{\alpha}, \underline{c})$ be the maximizer of optimization problem \eqref{eq:maxin}. Define $\underline{\mathfrak{Z}}(Z, EW) = \big([\underline{\alpha}_1 g_1, \underline{\alpha}_2 g_2, \dots, \underline{\alpha}_N g_N ], \underline{c}\big)$.  
We have $\underline{\mathfrak{Z}}(Z, EW)\subseteq EW$.  
\end{prop}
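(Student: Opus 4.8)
The plan is to mirror the proof of Proposition~\ref{prop:Zminout}, but working with the H-rep rather than the V-rep: I take an arbitrary point of $\underline{\mathfrak{Z}}(Z,EW)$, write it through the G-rep, and check it against the linear inequalities $Hx \le h$ describing $EW$ (throughout, I read the pair $(H,h)$ appearing in \eqref{eq:maxin} as an H-rep of the set $EW$ itself, which is consistent with A2 when $E$ is the identity and is the natural choice in general). First I would record that the feasible set of \eqref{eq:maxin} is nonempty and compact: it is contained in $[0,1]^N \times EW$, since any feasible $c$ satisfies $Hc \le h - |HG|\alpha \le h$ and $EW$ is a nonempty polytope; together with upper semicontinuity of $\sum_{i=1}^N d_i\log(\alpha_i)$ this guarantees that the maximizer $(\underline{\alpha},\underline{c})$ exists and, in particular, satisfies the constraints $H\underline{c} + |HG|\underline{\alpha} \le h$ and $0 \le \underline{\alpha} \le 1$.

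Then, given any $x \in \underline{\mathfrak{Z}}(Z,EW)$, the definition of the G-rep yields coefficients $\theta_i \in [-1,1]$ with $x = \underline{c} + \sum_{i=1}^N \theta_i \underline{\alpha}_i g_i$. Fix a row index $\ell$ and let $H_\ell$ denote the $\ell$-th row of $H$. The key elementary estimate is the support-function bound $H_\ell x = H_\ell \underline{c} + \sum_{i=1}^N \theta_i \underline{\alpha}_i (H_\ell g_i) \le H_\ell \underline{c} + \sum_{i=1}^N \underline{\alpha}_i |H_\ell g_i|$, where I use $|\theta_i|\le 1$ and, crucially, $\underline{\alpha}_i \ge 0$, so that $\underline{\alpha}_i$ may be pulled inside the absolute value. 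The right-hand side is exactly the $\ell$-th entry of $H\underline{c} + |HG|\underline{\alpha}$, which is $\le h_\ell$ by feasibility of the maximizer. Since $\ell$ was arbitrary, $Hx \le h$, i.e. $x \in EW$; and since $x$ was an arbitrary point of $\underline{\mathfrak{Z}}(Z,EW)$, this gives $\underline{\mathfrak{Z}}(Z,EW)\subseteq EW$.

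I do not expect a serious obstacle: once the setup is fixed, the argument is a one-line support-function computation, and it is notably simpler than the outer-approximation case because the H-rep delivers containment directly, with no appeal to convex hulls. The only points that need care are (i) making precise that $(H,h)$ here plays the role of an H-rep of $EW$ rather than of $W$, and (ii) arguing that the maximizer of \eqref{eq:maxin} is well defined despite the $\log$ in the objective — both handled above. One could additionally remark that the logarithmic objective, being strictly concave in the coordinates with $d_i>0$, forces those $\underline{\alpha}_i$ to be strictly positive, so the inner approximation is non-degenerate along the corresponding generators, but this is not needed for the containment claim.
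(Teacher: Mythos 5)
Your proof is correct and follows essentially the same route as the paper's: both reduce the containment to the row-wise support-function identity $\max_{\theta\in[-1,1]^N} H_\ell\big(\underline{c}+\sum_{i}\theta_i\underline{\alpha}_i g_i\big)=H_\ell\underline{c}+|H_\ell G|\,\underline{\alpha}\le h_\ell$, which is exactly the feasibility constraint of \eqref{eq:maxin}. Your added remarks on existence of the maximizer and on reading $(H,h)$ as an H-rep of $EW$ rather than of $W$ (a point on which the paper is indeed loose) are sensible but do not change the core argument.
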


\begin{proof}
We first show that, for $\alpha \geq 0$ and any $c$,  $Hc + |HG|\alpha \leq h$ if and only if 
\begin{align}
\forall \theta \in \textstyle\prod_{i=1}^N[-\alpha_i, \alpha_i]: H (c + \sum_{i=1}^N \theta_{i} g_i)\leq h,
\label{eq:prop21}
 \end{align}
where $\theta_i$ is the $i^{\rm th}$ element of $\theta$. Let $H_{\ell}$ and $h_{\ell}$ be the $\ell^{\rm th}$ row of $H$ and $h$ respectively.  Eq. \eqref{eq:prop21} is equivalent to 
\begin{align}
& \forall \ell \in \{1,2,\dots, L\}: 
\begin{array}{rl}
\max_{\theta} & H_\ell (\underline{c} + G\theta) \leq h_\ell \\
\text{ s.t. }& 
\theta \in  \textstyle\prod_{i=1}^N[-\alpha_i,\alpha_i]
\end{array} \\
& \ \ \ \ \ \ \ \ \ \  \ \ \ \ \ \ \ \ \ \ \ \ \ \Updownarrow \nonumber \\
& \forall \ell \in \{1,2,\dots, L\}:  H_\ell c + |H_\ell G|\alpha \leq h_\ell. 
\label{eq:prop22}
\end{align}
Eq. \eqref{eq:prop22} is equivalent to $Hc + |HG|\alpha \leq h$. Therefore the maximizer $(\underline{\alpha}, \underline{c})$ satisfies Eq. \eqref{eq:prop21}, which implies 
\begin{align}
\forall \theta' \in \textstyle\prod_{i=1}^N[-1, 1]: H (\underline{c} + \sum_{i=1}^N \theta_{i}'  \underline{\alpha}_i g_i)\leq h.  
\end{align}
That is, $ \big([\underline{\alpha}_1 g_1, \underline{\alpha}_2 g_2, \dots, \underline{\alpha}_N g_N ], \underline{c}\big) \subseteq EW$. 
\end{proof}

Again, the maximal (in the set inclusion sense) inner approximation does not exist in general. Here we maximize the volume of a hyper-rectangle in $\mathbb{R}^N$, defined by $d_i$ and $\alpha$. 
\revise{In particular, as a heuristic, we pick $d_i = \Vert g_i \Vert$ for $i = 1,2,\dots, N$ throughout the paper}.

\subsubsection{Efficient Minkowski Difference between Aligned Zonotopes}  
Next, we show that the Minkowski difference amounts to element-wise generator subtraction when the subtrahend zonotope is aligned with the minuend zonotope. 
\begin{prop}\label{prop:ZMinkowskiminus}
\normalfont 
Let $Z = (G, c)$ and $Z'= (G', c')$ be zonotopes and suppose that $Z'$ is aligned with $Z$. Then $Z \ominus Z' = \big([(1-\alpha_1) g_1, (1-\alpha_2) g_2, \dots, (1 - \alpha_N) g_N], c - c'\big)$. 
\end{prop}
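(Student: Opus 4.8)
The plan is to prove the two set inclusions $Z \ominus Z' \supseteq R$ and $Z \ominus Z' \subseteq R$ separately, where $R = \big([(1-\alpha_1)g_1, \dots, (1-\alpha_N)g_N], c - c'\big)$ denotes the claimed right-hand side. For the ``$\supseteq$'' direction, I would take an arbitrary point $z \in R$, write $z = (c - c') + \sum_i \theta_i (1-\alpha_i) g_i$ with $\theta_i \in [-1,1]$, and verify directly that $z + Z' \subseteq Z$: any point of $z + Z'$ has the form $z + c' + \sum_i \eta_i \alpha_i g_i$ with $\eta_i \in [-1,1]$, which equals $c + \sum_i \big(\theta_i(1-\alpha_i) + \eta_i \alpha_i\big) g_i$, and since $\theta_i(1-\alpha_i) + \eta_i\alpha_i$ is a convex combination of $\theta_i$ and $\eta_i$, it lies in $[-1,1]$. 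Hence $z + Z' \subseteq Z$, i.e. $z \in Z \ominus Z'$.

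For the ``$\subseteq$'' direction, the cleanest route is to avoid arguing about individual points of the Minkowski difference and instead invoke the algebraic identities from Lemma~\ref{lem:Min}. The key observation is that $R$ is exactly the zonotope one obtains by ``cancelling'' the generators of $Z'$ from $Z$, so by Lemma 2(iii) (Minkowski sum decomposition of zonotopes) one has $R \oplus Z' = Z$. Indeed, $R \oplus Z' = \big([(1-\alpha_1)g_1, \dots, (1-\alpha_N)g_N, \alpha_1 g_1, \dots, \alpha_N g_N], (c-c') + c'\big)$, and since the pair of generators $(1-\alpha_i)g_i$ and $\alpha_i g_i$ are parallel, the segment they sweep out together is the same as the segment swept by $g_i$ alone (their scaling factors sum to $1$); collecting terms gives $([g_1, \dots, g_N], c) = Z$. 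Now, for any zonotope, hence any convex compact nonempty set, Lemma~\ref{lem:Min}(iii) gives $Z \ominus Z' = (R \oplus Z') \ominus Z' = R$, which simultaneously yields both inclusions at once.

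Given that Lemma~\ref{lem:Min}(iii) is available, I would actually merge the two directions: the entire proof reduces to establishing the generator-level identity $R \oplus Z' = Z$ and then quoting $(X \oplus Y) \ominus Y = X$ with $X = R$, $Y = Z'$. The main obstacle — really the only non-trivial point — is justifying carefully that two parallel generators $a g_i$ and $b g_i$ with $a, b \ge 0$ combine under Minkowski sum into the single generator $(a+b) g_i$ (up to the center), i.e. $(\{a g_i\}, 0) \oplus (\{b g_i\}, 0) = (\{(a+b)g_i\}, 0)$; this is just the statement that $\{ \theta a + \eta b : \theta, \eta \in [-1,1]\} = [-(a+b), a+b]$, which follows since the left set is an interval symmetric about $0$ with maximum $a + b$. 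Applying this with $a = 1-\alpha_i$, $b = \alpha_i$ for each $i$ and summing the centers $(c - c') + c' = c$ gives $R \oplus Z' = Z$, completing the argument. One should also note in passing that $\alpha_i \in [0,1]$ (from Definition~\ref{defn:align}) is what guarantees $1 - \alpha_i \ge 0$, so that $R$ is a legitimately defined zonotope; if any $\alpha_i$ exceeded $1$ the claimed identity would fail.
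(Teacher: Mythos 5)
Your proposal is correct and, in its final merged form, is exactly the paper's argument: establish $R \oplus Z' = Z$ at the generator level and then apply Lemma 1(iii) to conclude $Z \ominus Z' = (R \oplus Z') \ominus Z' = R$. The only difference is that you spell out why parallel generators $(1-\alpha_i)g_i$ and $\alpha_i g_i$ merge into $g_i$ under Minkowski sum (a detail the paper simply asserts), which is a welcome addition rather than a deviation.
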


\revise{
\begin{proof}
Let $\Delta:  = \big([(1-\alpha_1) g_1, (1-\alpha_2) g_2, \dots, (1 - \alpha_N) g_N], c - c'\big)$. 
By Lemma \ref{lem:Min} iii), $\Delta= \Delta \oplus Z' \ominus Z'$ as $\Delta$, $Z'$ are convex, compact and nonempty. Also note that $\Delta \oplus Z' = Z$, hence $\Delta = Z\ominus Z'$. 
\end{proof}
}

We summarize this part by the following proposition. 
\begin{prop}\label{prop:MinkMinusOverUnder}
Let $Z$ be a zonotope and let $\overline{\mathfrak{Z}}(Z, EW)$, $\underline{\mathfrak{Z}}(Z, EW)$ be defined by solving \eqref{eq:minout}, \eqref{eq:maxin} respectively, then $Z \ominus \overline{\mathfrak{Z}}(Z, EW) \subseteq Z \ominus EW \subseteq Z \ominus \underline{\mathfrak{Z}}(Z, EW)$. Particularly,  $Z \ominus \overline{\mathfrak{Z}}(Z, EW)$ and $Z \ominus \underline{\mathfrak{Z}}(Z, EW)$ can be computed efficiently with generator-wise subtraction. 
\end{prop}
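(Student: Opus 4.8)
\textbf{Proof proposal for Proposition \ref{prop:MinkMinusOverUnder}.}

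The plan is to combine the three preceding propositions with the monotonicity of the Minkowski difference in its subtrahend. First I would recall the containments established earlier: by Proposition \ref{prop:Zminout} we have $EW \subseteq \overline{\mathfrak{Z}}(Z, EW)$, and by the second proposition of this subsection we have $\underline{\mathfrak{Z}}(Z, EW) \subseteq EW$. Chaining these gives $\underline{\mathfrak{Z}}(Z, EW) \subseteq EW \subseteq \overline{\mathfrak{Z}}(Z, EW)$.

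Next I would invoke the elementary fact that $X \ominus Y$ is inclusion-reversing in $Y$: if $Y_1 \subseteq Y_2$, then for any $z$ with $z + Y_2 \subseteq X$ we also have $z + Y_1 \subseteq z + Y_2 \subseteq X$, so $X \ominus Y_2 \subseteq X \ominus Y_1$. This is immediate from the definition of $\ominus$ and does not even require convexity, so I would just state it in one line (or cite it as folklore). Applying it to the chain above with $X = Z$ yields
\begin{align}
Z \ominus \overline{\mathfrak{Z}}(Z, EW) \subseteq Z \ominus EW \subseteq Z \ominus \underline{\mathfrak{Z}}(Z, EW),
\end{align}
which is the first claim.

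For the second claim, I would note that both $\overline{\mathfrak{Z}}(Z, EW)$ and $\underline{\mathfrak{Z}}(Z, EW)$ are, by construction in Propositions \ref{prop:Zminout} and its successor, zonotopes aligned with $Z$ in the sense of Definition \ref{defn:align} (their generators are exactly $\overline{\alpha}_i g_i$, resp. $\underline{\alpha}_i g_i$, with $\overline{\alpha}_i,\underline{\alpha}_i \in [0,1]$). Hence Proposition \ref{prop:ZMinkowskiminus} applies directly: each of these two Minkowski differences equals the zonotope whose $i$-th generator is $(1-\overline{\alpha}_i)g_i$ (resp. $(1-\underline{\alpha}_i)g_i$) and whose center is $c - \overline{c}$ (resp. $c - \underline{c}$), i.e. it is obtained by generator-wise scalar subtraction and a single center subtraction. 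This is clearly $\mathcal{O}(nN)$ work once the scalars $\overline{\alpha}_i$, $\underline{\alpha}_i$ are known from the linear/convex programs, so the efficiency claim follows with no further computation.

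I do not anticipate a genuine obstacle here: the statement is essentially a bookkeeping corollary of results already proved in the subsection, and the only ingredient not yet stated explicitly is the inclusion-reversing property of $\ominus$ in the subtrahend, which is a one-line consequence of the definition. The one point to be careful about is making sure the hypotheses of Proposition \ref{prop:ZMinkowskiminus} (alignment, and convexity/compactness/nonemptiness, which hold automatically for zonotopes) are verified for $\overline{\mathfrak{Z}}(Z,EW)$ and $\underline{\mathfrak{Z}}(Z,EW)$ before invoking it; this is straightforward from their G-reps.
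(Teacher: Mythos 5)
Your proposal is correct and follows essentially the same route as the paper's own (very terse) proof: combine Propositions \ref{prop:Zminout}--\ref{prop:ZMinkowskiminus} with the alignment of $\overline{\mathfrak{Z}}(Z,EW)$ and $\underline{\mathfrak{Z}}(Z,EW)$ with $Z$. You merely make explicit the inclusion-reversing property of $\ominus$ in its subtrahend, which the paper leaves implicit; that is a welcome clarification, not a deviation.
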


\begin{proof}
It follows from Proposition \ref{prop:Zminout}-\ref{prop:ZMinkowskiminus} and the fact that both $\overline{\mathfrak{Z}}(Z, EW)$ and $\underline{\mathfrak{Z}}(Z, EW)$ are aligned with $Z$. 
\end{proof}

\subsection{Approximation of Backward Reachable Sets}
We can compute a zonotopic over/under-approximation of the BRS $X_k$ recursively as follows: 
\begin{align}
    \underline{Z}_0 & = \overline{Z}_0 = Z_0, \label{eq:cpreZapprx1}\\
    \underline{Z}_{k+1} & = A^{-1} \big(\underline{Z}_k \ominus \overline{\mathfrak{Z}}(\underline{Z}_k, EW)\oplus - BU - K\big), \label{eq:cpreZapprx2} \\
    \overline{Z}_{k+1} & = A^{-1} \big(\overline{Z}_k \ominus \underline{\mathfrak{Z}}(\overline{Z}_k, EW)\oplus - BU - K\big). \label{eq:cpreZapprx3}
\end{align}

\begin{prop}
\normalfont Let $X_k$ be defined by Eq. \eqref{eq:cpre}, and $\underline{Z}_k, \overline{Z}_k$ be defined by Eq. \eqref{eq:cpreZapprx1}-\eqref{eq:cpreZapprx3}, we have $\underline{Z}_k \subseteq X_k \subseteq \overline{Z}_k$. 
\end{prop}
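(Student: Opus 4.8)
The plan is to argue by induction on $k$. For the base case $k=0$, Assumption A1 together with \eqref{eq:cpreZapprx1} gives $\underline{Z}_0 = \overline{Z}_0 = Z_0 = X_0$, so both inclusions hold with equality. For the inductive step I would assume $\underline{Z}_k \subseteq X_k \subseteq \overline{Z}_k$ and propagate this through one application of the recursions \eqref{eq:cpreZapprx2}--\eqref{eq:cpreZapprx3}, comparing against the fixed-point characterization \eqref{eq:cpreM}.

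The two ingredients I would isolate first are: (a) the standard monotonicity of Minkowski arithmetic, namely $S \subseteq S'$ implies $S \ominus Y \subseteq S' \ominus Y$ and $S \oplus P \subseteq S' \oplus P$, while $Y \subseteq Y'$ implies $S \ominus Y' \subseteq S \ominus Y$, together with the fact that the linear map $A^{-1}$ (well defined by Assumption A3) preserves set inclusion; and (b) a reformulation of \eqref{eq:cpreM} using invertibility of $A$, namely $X_{k+1} = A^{-1}\big(X_k \ominus EW \oplus -BU - K\big)$, which holds because $x \in X_{k+1} \iff Ax \in X_k \ominus EW \oplus -BU - K \iff x \in A^{-1}(\,\cdot\,)$.

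For the lower inclusion I would chain: by Proposition \ref{prop:MinkMinusOverUnder}, $\underline{Z}_k \ominus \overline{\mathfrak{Z}}(\underline{Z}_k, EW) \subseteq \underline{Z}_k \ominus EW$; by the induction hypothesis $\underline{Z}_k \subseteq X_k$ and monotonicity in the minuend, $\underline{Z}_k \ominus EW \subseteq X_k \ominus EW$; Minkowski-adding $-BU-K$ and then applying $A^{-1}$ (both operations monotone) and invoking (b) yields $\underline{Z}_{k+1} \subseteq X_{k+1}$. The upper inclusion is symmetric: from $X_k \subseteq \overline{Z}_k$ and monotonicity, $X_k \ominus EW \subseteq \overline{Z}_k \ominus EW$; by Proposition \ref{prop:MinkMinusOverUnder}, $\overline{Z}_k \ominus EW \subseteq \overline{Z}_k \ominus \underline{\mathfrak{Z}}(\overline{Z}_k, EW)$; Minkowski-adding $-BU-K$, applying $A^{-1}$, and using (b) again gives $X_{k+1} \subseteq \overline{Z}_{k+1}$. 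This closes the induction.

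I do not expect a serious obstacle. The points requiring care are: recording the monotonicity facts for $\ominus$, since they are not stated explicitly earlier (only Lemma \ref{lem:Min} on related identities is available); handling the degenerate case in which $\underline{Z}_k$ is empty or the linear program \eqref{eq:minout} defining $\overline{\mathfrak{Z}}(\underline{Z}_k, EW)$ is infeasible, where the lower inclusion is vacuous; and making sure the left-to-right parsing of the Minkowski expressions in \eqref{eq:cpreZapprx2}--\eqref{eq:cpreZapprx3} matches that of \eqref{eq:cpreM}, so the chains of inclusions align term by term.
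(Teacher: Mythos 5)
Your proposal is correct and follows essentially the same route as the paper: induction on $k$, with the base case from \eqref{eq:cpreZapprx1} and the inductive step chaining Proposition \ref{prop:MinkMinusOverUnder} with monotonicity of the Minkowski operations and the reformulation of \eqref{eq:cpreM} under $A^{-1}$. Your version is somewhat more explicit than the paper's (which leaves the monotonicity facts and the $A^{-1}$ rewriting implicit), but the argument is the same.
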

 
\begin{proof}
We prove this by induction.  
When $k = 0$, $\underline{Z}_0 = \overline{Z}_0 = Z_0 = X_0$ by \eqref{eq:cpreZapprx1}. 
Suppose that $\underline{Z}_k \subseteq X_k \subseteq \overline{Z}_k$, we have 
\begin{align}
\hspace{-7mm}\underline{Z}_k \ominus \overline{\mathfrak{Z}}(\underline{Z}_k, EW) & \subseteq \underline{Z}_k \ominus EW & (\text{Proposition }\ref{prop:MinkMinusOverUnder})  \nonumber \\
& \subseteq X_k \ominus EW.  & (\underline{Z}_k \subseteq X_k) \label{eq:prop5}
\end{align}
Combining Eq. \eqref{eq:prop5}, \eqref{eq:cpreZapprx1} and Eq. \eqref{eq:cpreM} yields $\underline{Z}_{k+1} \subseteq X_{k+1}$. Similarly, one can show $X_{k+1} \subseteq \overline{Z}_{k+1}$. 
\end{proof}

Eq. \eqref{eq:cpreZapprx2}, \eqref{eq:cpreZapprx3} only involve Minkowski addition, linear transformation of zonotopes and Minkowski difference where the subtrahend zonotope is aligned with the minuend zonotope. The above three operations can be done efficiently with G-rep manipulations. 
The time for computing $\underline{Z}_k$ grows modestly with $k$ because the number of $\underline{Z}_k$'s generators, denoted by $N_k$, increases linearly with $k$. In fact, $N_{k+1} = N_k + N_U$ where $N_U$ is the (constant) number of generators of the zonotopic control input set $U$.  
In what follows, we introduce an order reduction technique to upper bound the time complexity of computing $\underline{Z}_k$. 

\subsubsection{Zonotope Order Reduction}\label{sec:red}
The order of an $n$-dimensional zonotope with $N$ generators is defined to be $N/n$. 
Zonotope order reduction problem concerns approximating a given zonotope with another one with lower order.  
Most of the existing techniques focus on finding outer approximations because zonotopes are typically used to overestimate forward reachable sets. Whereas in this paper, we find inner approximations using the following fact.  
\begin{prop}\label{prop:reduction}
Let $Z = \big(G= [g_1, g_2, \dots, g_N], c\big)$ be a zonotope. Define $G_1$ to be the matrix after removing arbitrary two columns $g_i$, $g_j$ from $G$ and appending $g_i + g_j$, and define $G_2$ to be the matrix after removing columns $g_i$, $g_j$ from $G$ and appending $g_i -g_j$. Then $Z_1 = (G_1, c) \subseteq Z$ and  $Z_2 = (G_2, c) \subseteq Z$. 
\end{prop}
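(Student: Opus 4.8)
The plan is to show the two claimed inclusions $Z_1 \subseteq Z$ and $Z_2 \subseteq Z$ by exhibiting, for every point of $Z_1$ (resp. $Z_2$), a valid choice of coefficients $\theta \in [-1,1]^N$ realizing it as a point of $Z$. The essential observation is that merging two generators $g_i, g_j$ into a single generator $g_i + g_j$ (or $g_i - g_j$) restricts the box of admissible coefficients: a point $c + \sum_{\ell \ne i,j} \theta_\ell g_\ell + \eta (g_i + g_j)$ with $\eta \in [-1,1]$ is the same as $c + \sum_{\ell \ne i,j} \theta_\ell g_\ell + \eta g_i + \eta g_j$, i.e. the $Z$-coefficients of $g_i$ and $g_j$ are forced to be \emph{equal} (to $\eta$), rather than ranging independently over $[-1,1]$. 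Since $(\eta,\eta) \in [-1,1]^2$ whenever $\eta \in [-1,1]$, this is a subset of the original coefficient box, hence $Z_1 \subseteq Z$; the case of $g_i - g_j$ forces the $g_j$-coefficient to be $-\eta$, and again $(\eta, -\eta) \in [-1,1]^2$, giving $Z_2 \subseteq Z$.

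Concretely, first I would fix a point $z \in Z_1$, so $z = c + \sum_{\ell \in \mathcal{I}} \theta_\ell g_\ell + \eta(g_i + g_j)$ for some $\theta_\ell, \eta \in [-1,1]$, where $\mathcal{I} = \{1,\dots,N\}\setminus\{i,j\}$ indexes the retained original generators. Then I would rewrite this as $z = c + \sum_{\ell \in \mathcal{I}} \theta_\ell g_\ell + \eta g_i + \eta g_j$ and read off that the coefficient vector $(\theta_\ell)_{\ell\in\mathcal I}$ together with $\theta_i := \eta$ and $\theta_j := \eta$ lies in $[-1,1]^N$, so $z \in Z$. The argument for $Z_2$ is identical with $\theta_j := -\eta$. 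A cleaner, even shorter alternative I might prefer is to invoke Lemma 2(i)/(iii)-style reasoning on single-generator zonotopes: $\big(\{g_i+g_j\}, 0\big) \subseteq \big(\{g_i\},0\big) \oplus \big(\{g_j\},0\big)$ because any $\eta(g_i+g_j)$ with $\eta\in[-1,1]$ is $\eta g_i + \eta g_j$ with each factor in the respective segment; then decompose $Z = \bigoplus_i Z_i$ via Lemma 2(i), replace the pair $Z_i \oplus Z_j$ by the smaller set $\big(\{g_i+g_j\}, c_i+c_j\big)$, and use monotonicity of Minkowski sum under inclusion to conclude $Z_1 \subseteq Z$ (and similarly $Z_2$).

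I do not anticipate a genuine obstacle here — the statement is essentially the observation that a box in $\mathbb{R}^N$ contains its diagonal (and anti-diagonal) slices. The only point requiring a word of care is making sure the bookkeeping of which generators are removed and which coefficients are reindexed is stated precisely, so that the displayed coefficient vector is manifestly in $[-1,1]^N$; and, if I use the Minkowski-sum route, citing monotonicity of $\oplus$ with respect to set inclusion (if $A \subseteq A'$ then $A \oplus B \subseteq A' \oplus B$), which is immediate from the definition and may warrant an explicit half-sentence since it is not among the lemmas quoted above.
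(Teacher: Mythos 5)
Your proposal is correct and matches the paper's argument: the ``cleaner alternative'' you sketch --- decomposing $Z$ into single-generator segments $l(g_k)=\{\theta g_k \mid \theta\in[-1,1]\}$, noting $l(g_i+g_j)\subseteq l(g_i)\oplus l(g_j)$, and concluding by monotonicity of the Minkowski sum --- is precisely the proof given in the paper, and your primary coefficient-bookkeeping argument is just the same containment written out pointwise.
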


\revise{
\begin{proof}
Let $\l(g_k) := \{\theta g_k \mid \theta \in [-1,1]\}$, then $Z = c + \bigoplus_{k=1}^N l(g_k)$, $Z_1 = c+ \bigoplus_{k\neq i,j}^N l(g_k) \oplus l(g_i + g_j)$. Since $l(g_i + g_j) = \{\theta g_1 + \theta g_2 \mid \theta \in [-1,1]\} \subseteq \{\theta_1 g_1 + \theta_2 g_2 \mid \theta_1, \theta_2 \in [-1,1]\}= l(g_i) \oplus l(g_j)$, $Z_1 \subseteq Z$. Similarly, $Z_2 \subseteq Z$. 
\end{proof}
}

Note that, in Proposition \ref{prop:reduction}, the number of generators of $Z_1$ (or $Z_2$) is fewer than that of $Z$ by one. Our zonotope order reduction procedure will keep replacing two generators $g_i, g_j$ by their combination (either $g_i + g_j$ or $g_i - g_j$) until the order of the resulting zonotope is small enough. Particularly,  we use the following heuristic to select $g_i, g_j$: 
\begin{align}
(i,j) = \text{arg\,}\text{min}_{1\leq i < j\leq N} \Vert g_i\Vert_2 \Vert g_j - \hat{g}_i g_j^\top \hat{g}_i \Vert_2, 
\end{align} 
where $\hat{g}_i = \tfrac{g_i}{\Vert g_i \Vert_2}$. 
Then we will add $g_i + g_j$ if $\Vert (g_i + g_j)^\top G'\Vert_2 \geq \Vert (g_i - g_j)^\top G'\Vert_2$, and add $g_i - g_j$ otherwise, where $G'$ is the transpose of the right inverse of the generator matrix after removing columns $g_i$, $g_j$. The idea is to combine two generators that are either closely aligned or small in 2-norm, and the combined generator should be larger and more perpendicular to the remaining generators.

\subsubsection{Deriving Reachability Control Law using $\underline{Z}_k$}\label{sec:law}
Once zonotopic inner approximations $\underline{Z}_k$ of the BRSs are computed, checking if a state $x$ belongs to $\underline{Z}_k$ amounts to solving a linear program. 
Moreover, for any state $x \in \underline{Z}_{k+1}$, we can find a control input $u \in U(x, \underline{Z}_k)$ that brings $x$ to $\underline{Z}_{k}$ in one step, where  $U(x, \underline{Z}_k)$ is defined to be
\begin{align}
    & \{u \in U \mid \forall w \in W: Ax + Bu + Ew + K \in \underline{Z}_{k}\}  \nonumber \\
 = & \textbf{Proj}_{u}\left\{(u, \theta)\,  \Bigg\vert\, \begin{array}{l} Ax + Bu + K = \\ 
c^{(k)} + \textstyle\sum_{i=1}^{N_{k}} \theta_i g_i^{(k)}, \\ u \in U, \theta_i \in [-1,1]\end{array}\right\}, \label{eq:getu}
\end{align}
where $\big([g_1^{(k)}, g_2^{(k)}, \dots, g_{N_k}^{(k)}], c^{(k)}\big)$ is the G-rep of $\underline{Z}_k \ominus \overline{\mathfrak{Z}}(\underline{Z}_k, EW)$, which can be saved during the computation (see Eq. \eqref{eq:cpreZapprx2}). We do not need to explicitly perform the projection step in Eq. \eqref{eq:getu} as it is sufficient to find one $u \in U(x, \underline{Z}_k)$ by solving a  linear program.  
For any initial state $x_0 \in \underline{Z}_k$, iteratively applying $u_t \in U(x_{t},\underline{Z}_{k-t-1})$ yields a feedback control strategy, which generates a sequence  $u_0, u_1, \dots u_{k-1}$ and drives the initial state $x_0$ into the target set $\underline{Z}_0 = Z_0$ in precisely $k$-steps, regardless of the disturbance inputs.


\section{Evaluation \& Discussion}

\subsection{Comparisons}
We compare our approach for under-approximating $Z \ominus EW$ with two other methods: one by Althoff \cite{althoff2015computing} and one based on the work by Sadraddini and Tedrake \cite{sadraddini2019linear}. 
Whenever the disturbance set $W$ is a zonotope in its G-rep, $Z \ominus EW$ can be estimated by \cite{althoff2015computing}, but the result is not guaranteed to be an under-approximation.  
This approach 
outperforms the exact computation but is still expensive due to an H-rep manipulation.
\lcss{Alternatively, using the linear encoding of zonotope-containment problems \cite{sadraddini2019linear}, we derive the   linear program below that under-approximates $Z\ominus EW$:}
\begin{align}
\begin{array}{rl}
\max_{\Gamma, \gamma, \alpha, c} & \textstyle\sum_{i=1}^N  \alpha_i \\
\text{s.t.}& [G_Z\text{diag}(\alpha), EG_W] = G_Z \Gamma \\
& c_Z - (c + E c_W) = G_Z \gamma\\
& \Vert [\Gamma, \gamma] \Vert_\infty \leq 1, \ 0 \leq \alpha \leq 1
\label{eq:Sadra}
\end{array}, 
\end{align}
where $(G_W, c_W)$ and $(G_Z, c_Z)$ are the G-reps of $W$ and $Z$ respectively. 
Similar to our approach, the solution of \eqref{eq:Sadra} also gives a zonotopic under-approximation $(G_Z\text{diag}(\alpha), c  )$ of $Z \ominus EW$ that aligns with the template $Z$. 
The linear program \eqref{eq:Sadra} scales differently from \eqref{eq:minout}, which dominates the time of computing BRSs. 
 Let $N_W$ and $N$ be the number of generators of $W$ and $Z$ respectively. For \eqref{eq:Sadra},  
\begin{align}
\begin{array}{rl}
    \# \text{variables} &  \hspace{-2mm}= \mathcal{O}\big(N(N+N_W) + n_x\big), \\
    \# \text{constraints}  &  \hspace{-2mm}= \mathcal{O}(N + n_x). 
\end{array}
\label{eq:bigO2}
\end{align}
\hl{The size of \eqref{eq:Sadra} is independent of the number of $W$'s vertices and grows with $N_W$, the number of generators of $W$. Thus \eqref{eq:Sadra} is more advantageous than  \eqref{eq:minout} whenever $W$ is a high dimensional zonotope with a small order.  } \revise{On the other hand, the number of variables in \eqref{eq:minout} is linear in $N$, whereas that in \eqref{eq:Sadra} is quadratic in $N$. }

We randomly generate about 2000 test cases, each case consists of a zonotope $Z \subseteq \mathbb{R}^{n_x}$, a hyper-rectangular $W \subseteq\mathbb{R}^{n_x}$ and a square matrix $E\in \mathbb{R}^{{n_x}\times {n_x}}$. The Minkowski difference  $Z \ominus EW$ is estimated using the three different methods. 
Fig. \ref{fig:time} shows the computation time w.r.t. the dimension and the order of zonotope $Z$.  Each dot represents the time for a specific case, and the surface is plotted with averaged values. 
All the experiments are run on a 1.80 GHz laptop with 16 GB RAM. 
The computation time of Althoff's approach grows fast w.r.t. the order and the dimension of $Z$ (in fact, we could not finish running any one of the higher-order cases after hours). 
\hl{Our approach scales better with the order of $Z$, but still grows relatively fast with the dimension $n_x$ because the number of $W$'s vertices grows exponentially with $n_x$ since we choose $W$ to be hyper-rectangles in this example.
Somewhat surprisingly, the computation time of Sadraddini's approach grows very slowly w.r.t. the order and the dimension of $Z$. 
This is consistent with the big-O analysis: in the largest test case, $n_x = 10$ and $N = 100$, 
but $W$ has about $10^3$ vertices $(M = 1000)$. Hence (min-out) has approximately ten times more variables than \eqref{eq:Sadra}. }
\begin{figure}[h]
  \centering
  \includegraphics[width=2.9in]{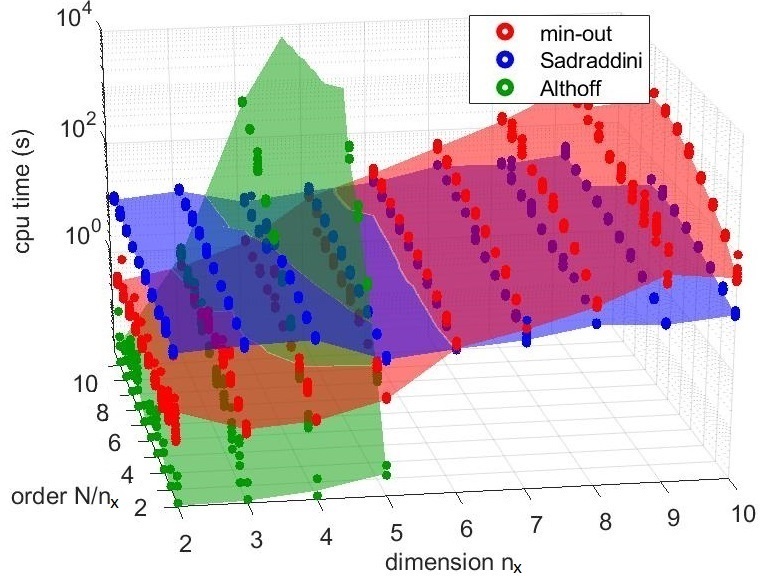}\\
  \caption{Upper: computation time for estimating $Z \ominus EW$. Lower: volume ratio distribution.}\label{fig:time}
\end{figure}
\lcss{Another metric is  the size of the obtained estimation. The volumes of the obtained zonotopic estimations are comparable.  Define $r_1 = \left(\tfrac{V_{\rm Althoff}}{V_{\rm min-out}}\right)^{1/n_x}$  and $r_1 = \left(\tfrac{V_{\rm Sadraddini}}{V_{\rm min-out}}\right)^{1/n_x}$, the statistics of $r_1$, $r_2$ are given in the table below. 
\begin{table}[h]
\vspace{-2mm}
\centering
\begin{tabular}{c|c|c|c|c|c}
\hline
 & mean & std. & min & max & confidence of $[0.95, 1.05]$\\ \hline
$r_1$ & $1.0017$ & $0.0577$ & $0.9900$ & $1.3856$ & $98.83\%$ \\ \hline
$r_2$ & $0.9678$ & $0.1891$  & $0.8372$ & $1.7498$ & $95.10\%$ \\ \hline
\end{tabular}
\end{table}
}



\subsection{Order Reduction}
We evaluate our order reduction technique with 29000 randomly generated zonotopes with different dimensions and orders. 
The approach introduced in Section \ref{sec:red} is used to reduce the order of each testing zonotope by one. 
As shown in Fig. \ref{fig:redtime} (upper), the computation time grows modestly with the zonotope's dimension and order. 
\begin{figure}[h]
  \centering
  \includegraphics[width=2.9in]{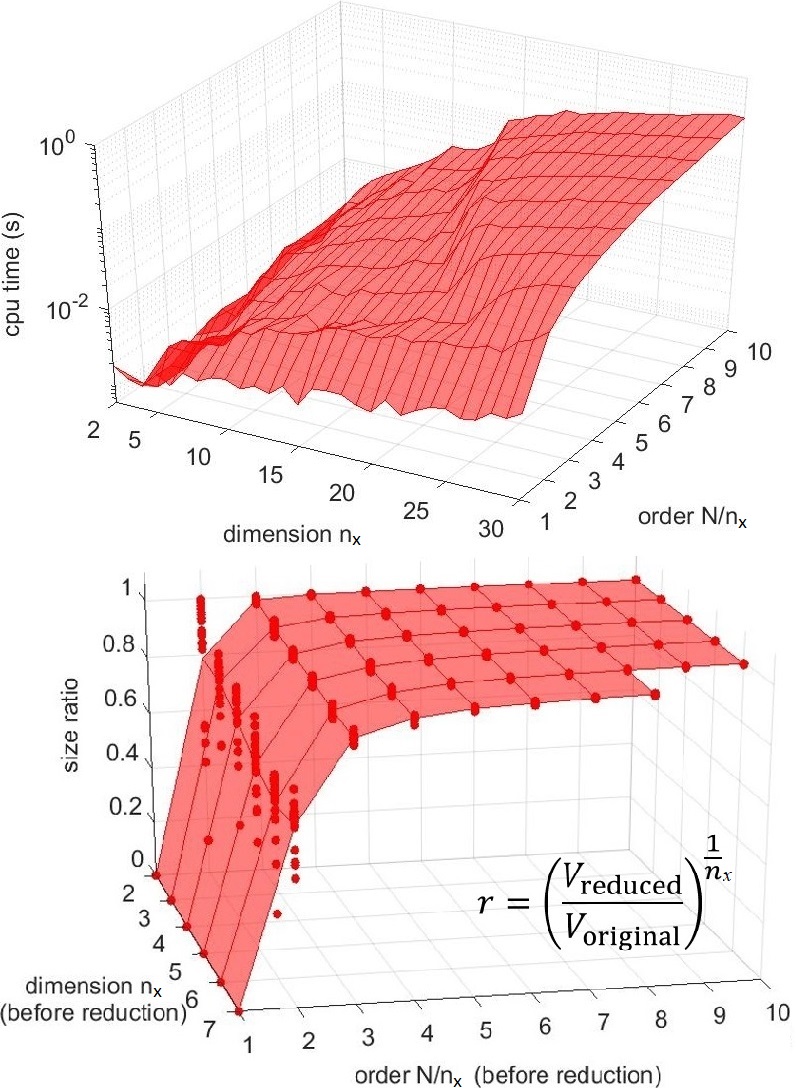}\\
  \caption{Upper: averaged computation time for reducing a zonotope's order by one. Lower: volume ratio between the reduced-order zonotope and the one before reduction.}\label{fig:redtime}
\end{figure}
The quality of the reduced-order zonotope is measured by the ratio between its volume and that of the original zonotope before reduction, defined in Fig. \ref{fig:redtime} (lower). We are able to run this evaluation for lower-dimensional cases because computing the exact volume of a zonotope is difficult for high-dimensional case due to the combinatorial complexity \cite{gover2010determinants}. 
In Fig. \ref{fig:redtime} (lower), the volume ratio increases with the the original zonotope's order because higher order means more freedom in selecting the generators to combine. 
In the presented cases, the ratio is close to one if the original zonotope's order is greater than three. 

\section{Case Studies}
\begin{figure*}[h]
  \centering
  \includegraphics[width=6.4in]{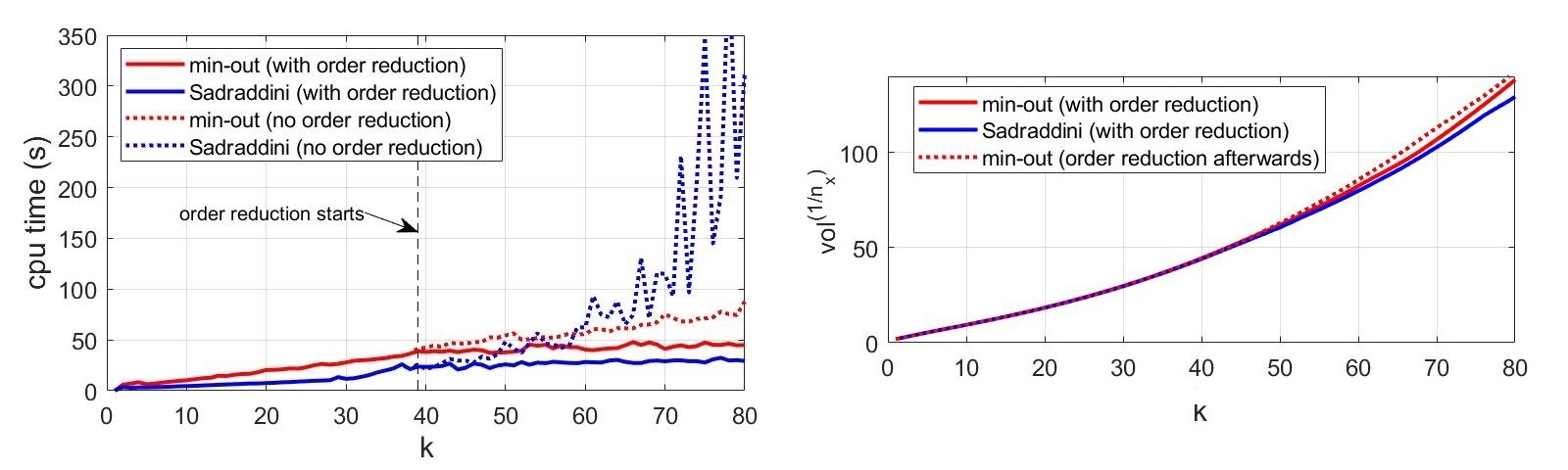}\\
  \caption{Backward reachable set computation for lateral dynamics. Left: computation time. Right: set volume.}\label{fig:lat}
  \includegraphics[width=6.4in]{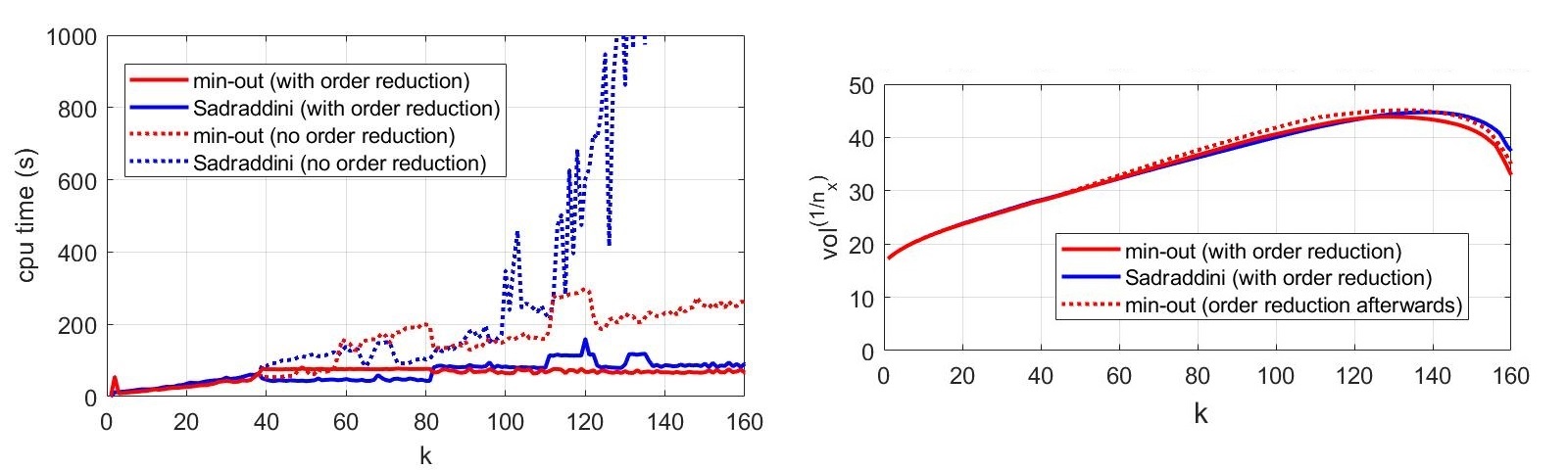}\\
  \caption{Backward reachable set computation for longitudinal dynamics. Left: computation time. Right: set volume.}\label{fig:long}
\end{figure*}

\subsection{Aircraft Position Control. }
With an aircraft position control system, we illustrate the overall BRS computation approach that combines Minkowski difference and order reduction to implement the iterations in Eq. \eqref{eq:cpreZapprx1}-\eqref{eq:cpreZapprx3}. 
The  linearized 6D lateral dynamics and the 6D longitudinal dynamics of the aircraft are in the form of Eq. \eqref{eq:sys}, whose $A$, $B$ matrices are given in Eq. \eqref{eq:ABlong}. 
For both systems, $E_{\rm lat} = E_{\rm long} = I$.
The states of the lateral and longitudinal dynamics are $x_{\rm lat} = [v, p, r, \phi, \psi, y]^\top$ and $x_{\rm long} = [u,w,q,\theta, x, h]^\top$ respectively, and control inputs are $u_{\rm lat} = [\delta_a, \delta_r]^\top$ and $u_{\rm long} = [\delta_e, \delta_t]^\top$ respectively (see TABLE \ref{tab:plane} and Fig. \ref{fig:ap}). 
We assume that the disturbance sets are hyper-boxes and their G-rep are $W_{\rm lat} = (\text{diag}([0.037, 0.00166, 0.0078, 0.00124, 0.00107,$  $0.07229]), 0)$ and $W_{\rm long} = (\text{diag}([0.3025,0.4025,0.01213, $ $0.006750,1.373,1.331]), 0)$.

\begin{align}
    A_{\rm lat} = &
\begin{psmallmatrix} 
1.004     & 0.1408  & 0.3095 & -0.3112  & 0 & 0 \\
0.03015  & 1.177   & 0.6016 &  -0.6029 & 0 & 0 \\
-0.02448 & -0.1877& 0.3803 & 0.5642   & 0 & 0\\
-0.01057 & -0.09588 &  -0.3343 & 1.277 & 0 & 0\\
0.0003943 &  0.0095901 & -0.005341 & -0.007447 & 1 & 0\\
-0.2579 &  -23.32 & -51.03 & 61.35 & -37.86 & 1\\
\end{psmallmatrix}, 
\nonumber \\
    A_{\rm long} = & 
\begin{psmallmatrix}
0.9911 & -0.04858 & -0.01709 & -0.4883 & 0 & 0 \\
0.0005870 & 0.9968 & 0.5168 & -0.0001398 & 0 & 0\\
0.0002070 & -0.001123 & 0.9936 & -5.092\times 10^{-5} & 0 & 0\\
1.907 & -1.032 & 0.01832 & 1 & 0 & 0 \\
-0.04601 & 0.001125 & 0.0002638 & 0.01130 & 1 & 0\\
-5.095\times 10^{-5} & -0.1874 & -0.01185 & 4.004 & 0 & 1\\
\end{psmallmatrix}, 
\nonumber \\
B_{\rm lat} =  & 
\begin{psmallmatrix}
-0.1189 & 0.007812\\
-0.1217 & 0.2643\\
0.01773& -0.2219\\
-0.02882&-0.09982\\
-0.0005607&0.002437\\
0.1120&-0.5785\\
\end{psmallmatrix}, 
\nonumber \\
B_{\rm long} =  & 
\begin{psmallmatrix}
1.504 & 7.349\times 10^{-5}\\
-0.04645 & -3.421 \times 10^{-6}\\
-0.009812&-1.488\times 10^{-6} \\
-9.080 \times 10^{-5}& -1.371\times 10^{-8}\\
-0.03479 & -1.700\times 10^{-6}\\
0.004171& 2.913\times 10^{-7}\\
\end{psmallmatrix}. 
\label{eq:ABlong} 
\end{align}

\begin{figure}[h]
  \centering
  \includegraphics[width=2.75in]{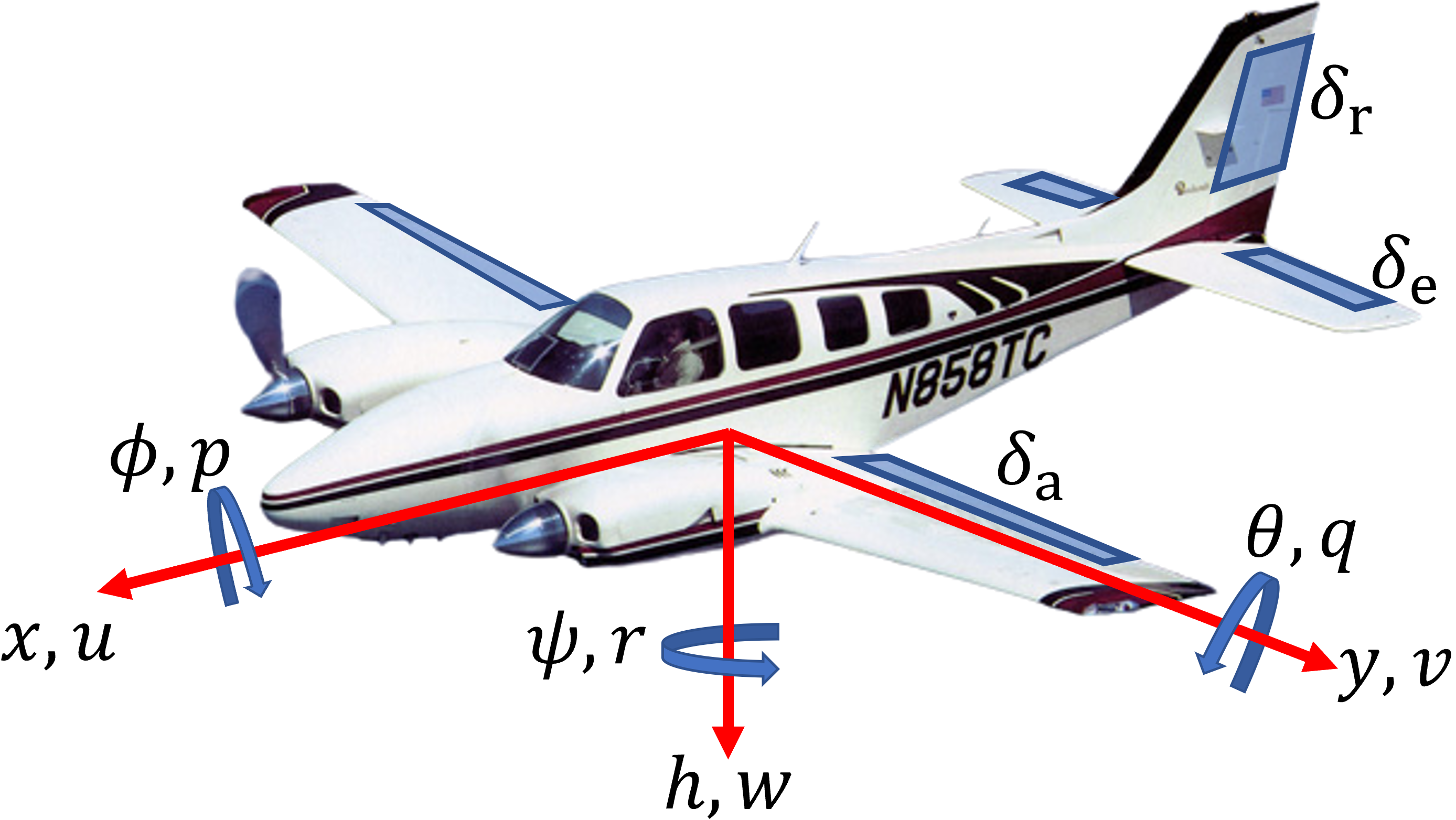}\\
  \caption{Illustration of the states and control inputs.}\label{fig:ap}
\end{figure}

\begin{table}[]
\centering
\caption{Variables in the aircraft model}
\begin{tabular}{c | c | c | c }
\hline
variable & physical meaning & range & unit \\ \hline
\hline
$v$ & velocity & $[-1, 1] $ & m/s\\  
$p$ & roll angular rate &$ [-1, 1] $& rad/s\\  
$r$ & yaw angular rate &$[-1, 1]$  & rad/s\\  
$\phi$ & roll angle & $[-\pi/5, \pi/5] $& rad\\  
$\psi$ & yaw angle & $[-\pi/5, \pi/5]$& rad\\  
$y$ & lateral deviation &$ [-2, 2]$ & m\\  
\hline
$u$ & velocity & $[40,60]$  & m/s\\ 
$w$ & velocity &$[0,10]$ & rad/s\\  
$q$ & pitch angular rate &$[-0.1,0.1]$   & rad/s\\  
$\theta$ & pitch angle & $[-\pi,\pi]$ & rad\\  
$x$ & horizontal displacement & $[0,800]$ & rad\\  
$h$ & altitude &$ [260 ,390]$ & m\\ 
\hline
$\delta_a$ & aileron deflection &$ [-\pi, \pi]$ & m\\ 
$\delta_r$ & rudder deflection &$ [-\pi, \pi]$ & m\\ 
\hline
$\delta_e$ & elevator deflection &$[-0.262,0.524]$ & m\\  
$\delta_t$ &throttle control &$[0,10^4]$ & m\\ \hline

\end{tabular}
\label{tab:plane}
\end{table}

For both the lateral and longitudinal dynamics, we can efficiently compute their $k$-step BRSs using the proposed approach for reasonably large horizons $k$, whereas the computation gets stuck at $k=3$ using the exact Minkowski difference provided by MPT3 \cite{MPT3}, or the approximation function implemented in CORA.  
Fig. \ref{fig:lat}, \ref{fig:long} show the results for the lateral dynamics and the longitudinal dynamics, respectively. 
In each figure, the left (right, resp.) plot shows  the cpu time for computing $\underline{Z}_k$ (the size of $\underline{Z}_k$, resp.) versus $k$, the number of backward expansion steps. 
The red curves are for our approach and the blue ones for the approach using Sadraddini’s zonotope containment encoding. 

\subsubsection{Cpu time plots} The solid (dotted, resp.) lines correspond to the computation time with (without, resp.) zonotope order reduction.
Using the order reduction technique (actived at $k = 39$), our approach and Sadraddini's approach give comparable results. 
Without order reduction, the computation time of Sadraddini's approach (dotted blue) grows faster w.r.t. $k$ than ours (dotted red). 
This is consistent with the big-O analysis because in our approach, the time-dominant Minkwoski difference step amounts to solving a linear program whose number of variables is linear in $k$ (proportional to the $\underline{Z}_k$'s order), whereas 
the number of variables is quadratic in $k$ in Sadraddini's formulation. 
Although our approach scales well even without order reduction, 
order reduction is still important in efficiently storing the zonotopic BRSs and 
deriving the control law. 

\subsubsection{Volume plots} The solid lines correspond to the results with order reduction ``in the loop" (i.e., in the $k^{\rm th}$ step, $\underline{Z}_k$ is reduced to a certain order before $\underline{Z}_{k+1}$ is computed). 
Whereas the dotted lines correspond to the results with order reduction after all $\underline{Z}_k$'s are computed (Ideally, we would like to compute $\underline{Z}_k$'s volume without order reduction at all, but this is impossible with the off-the-shelf volume computation tools in CORA because the complexity is  combinatorial  in $\underline{Z}_k$'s order).  
The two approaches give comparable results with order reduction. 
Moreover, since the dotted red line and the solid red line are close to each other, this indicates that the 
``wrapping effect" due to the order reduction in-the-loop is relatively small. 


\liren{
\subsection{Double Integrator with Uncontrollable Subspace}
With a 10D system, we show the effectiveness of the reachability controller derived from the zonotopic BRSs as described in Section \ref{sec:law}. 
The system consists of a double-integrator dynamics in the 3D space and a 4D uncontrollable subspace 
(the uncontrollable part 
affects the controllable part). 
The continuous-time dynamics is 
\begin{align}
\dot{x}_1 & = x_2 + x_7 + x_{10} + w_1, \ \ \dot{x}_2 = u_1 + w_2, \nonumber \\
\dot{x}_3 & = x_4 - x_8 + w_3,  \ \ \dot{x}_4 = u_2 + w_4,  \nonumber \\
\dot{x}_5 & = x_6 + x_9 + w_5,  \ \ \dot{x}_6 = u_3 + w_6,   \\
\dot{x}_7 & = -0.01x_7 + x_8 + w_7,  \ \ \dot{x}_8 = - x_8 - 0.01x_7 + w_8,  \nonumber \\
\dot{x}_9 & = -10^{-4}x_7 +2 x_{10} + w_9, \nonumber \\
\dot{x}_{10} & = - 2 x_9 - 10^{-4}x_{10} + w_{10}.  \nonumber  
\end{align}
We discretize the above dynamics with a sampling period $\Delta t = 0.5$s, and define the disturbance set $W$ so that 
$w_{\{1,3,5\}} \in [-0.12, 0.12]$, $w_{\{2,4,6\}} \in [-0.2, 0.2]$, $w_{\{7,8,9,10\}} \in [-0.1, 0.1]$,
and the control set $U = [-0.5, 0.5]^3$. 
Starting from a randomly picked initial condition in $\underline{Z}_{50}$, our goal is to reach a final state for which $x_i \in [9.5, 10.5]$ for $i \in \{1,3,5\}$ and $x_i \in [-0.5, 0.5]$ for the remaining $i$'s.  
We defined a controller as described in Section \ref{sec:law}. 
Fig. \ref{fig:int} shows a closed-loop trajectory under random disturbances.
The small target set is reached despite the oscillating uncontrollable dynamics.

\begin{figure}[h]
  \centering
  \includegraphics[width=2.85in]{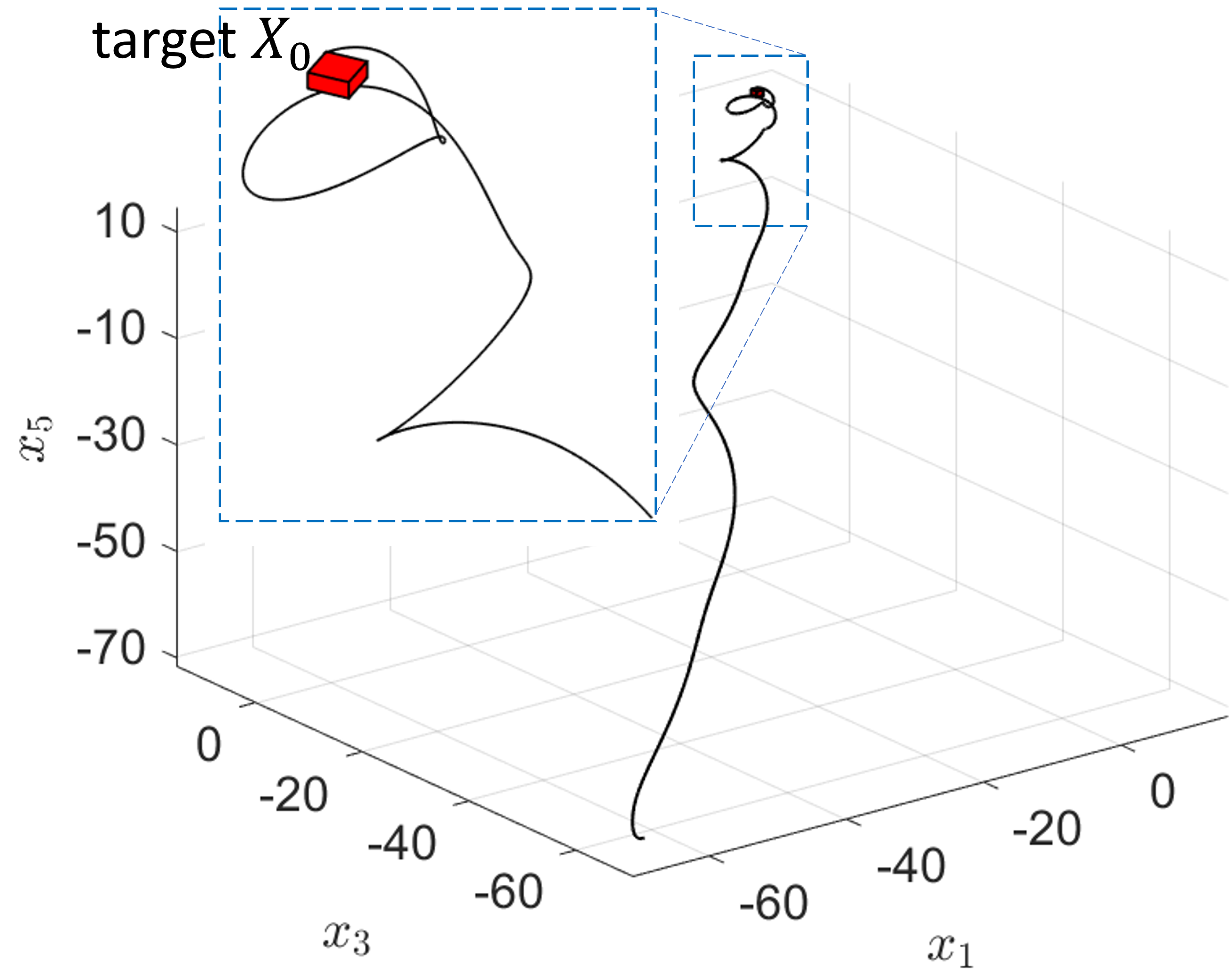}\\
  \caption{A closed-loop trajectory for the double-integrator dynamics. The red box is the target set. }\label{fig:int}
\end{figure}

\section{Conclusion}
In this paper, we develop an approach that under-approximates the backward reachable sets for uncertain linear systems using zonotopes.} 
The main technical ingredients are i) under-approximating the Minkowski difference between two zonotopes and ii) an order reduction technique tailored to enclosed zonotopes. 
These developments were evaluated with randomly generated instances and two case studies. 
Experiments show that our method is more scalable than the off-the-shelf tools (MPT3, CORA) and scales differently from the approach based on Sadraddini's zonotope-inclusion technique. 
In our method, the dominant Minkowski subtraction step requires solving a linear program whose size is linear in the zonotope's order, while that dependency is quadratic in Sadraddini's approach.
We will investigate extending our approach to nonlinear systems in the future. 


\noindent{\em Acknowledgments:} The authors would like to thank Yuhao Zhang from the University of Wisconsin-Madison and Sara Shoouri and Jiahong Xu from the University of Michigan for sharing the aircraft model.

\balance

\bibliographystyle{abbrv}
\bibliography{IEEEabrv,main}

\begin{thebibliography}{10}

\bibitem{althoff2015introduction}
M.~Althoff.
\newblock An introduction to {CORA} 2015.
\newblock In {\em 1st and 2nd Intl. Workshop on ARCH}, pages 120--151, 2015.

\bibitem{althoff2015computing}
M.~Althoff.
\newblock On computing the minkowski difference of zonotopes.
\newblock {\em arXiv preprint arXiv:1512.02794}, 2015.

\bibitem{althoff2021set}
M.~Althoff, G.~Frehse, and A.~Girard.
\newblock Set propagation techniques for reachability analysis.
\newblock {\em Annual Review of Control, Robotics, and Autonomous Systems},
  4:369--395, 2021.

\bibitem{bertsekas1972infinite}
D.~Bertsekas.
\newblock Infinite time reachability of state-space regions by using feedback
  control.
\newblock {\em IEEE TAC}, 17(5):604--613, 1972.

\bibitem{blanchini2008set}
F.~Blanchini and S.~Miani.
\newblock {\em Set-theoretic methods in control}.
\newblock Springer, 2008.

\bibitem{chen2018signal}
M.~Chen, Q.~Tam, S.~C. Livingston, and M.~Pavone.
\newblock Signal temporal logic meets reachability: Connections and
  applications.
\newblock In {\em Intl. WAFR}, pages 581--601. Springer, 2018.

\bibitem{chou2018using}
G.~Chou, Y.~E. Sahin, L.~Yang, K.~J. Rutledge, P.~Nilsson, and N.~Ozay.
\newblock Using control synthesis to generate corner cases: A case study on
  autonomous driving.
\newblock {\em IEEE TCAD}, 37(11):2906--2917, 2018.

\bibitem{girard2005reachability}
A.~Girard.
\newblock Reachability of uncertain linear systems using zonotopes.
\newblock In {\em Intl. Workshop on HSCC}, pages 291--305. Springer, 2005.

\bibitem{gover2010determinants}
E.~Gover and N.~Krikorian.
\newblock Determinants and the volumes of parallelotopes and zonotopes.
\newblock {\em Linear Algebra Its Appl.}, 433(1):28--40, 2010.

\bibitem{han2016enlarging}
D.~Han, A.~Rizaldi, A.~El-Guindy, and M.~Althoff.
\newblock On enlarging backward reachable sets via zonotopic set membership.
\newblock In {\em ISIC}, pages 1--8. IEEE, 2016.

\bibitem{MPT3}
M.~Herceg, M.~Kvasnica, C.~Jones, and M.~Morari.
\newblock {Multi-Parametric Toolbox 3.0}.
\newblock In {\em Proc.~of ECC}, pages 502--510, Z\"urich, Switzerland, July
  17--19 2013.
\newblock \url{http://control.ee.ethz.ch/~mpt}.

\bibitem{kochdumper2020computing}
N.~Kochdumper and M.~Althoff.
\newblock Computing non-convex inner-approximations of reachable sets for
  nonlinear continuous systems.
\newblock In {\em 59th CDC}, pages 2130--2137. IEEE, 2020.

\bibitem{kopetzki2017methods}
A.-K. Kopetzki, B.~Sch{\"u}rmann, and M.~Althoff.
\newblock Methods for order reduction of zonotopes.
\newblock In {\em 56th CDC}, pages 5626--5633. IEEE, 2017.

\bibitem{kurzhanskiy2011reach}
A.~A. Kurzhanskiy and P.~Varaiya.
\newblock Reach set computation and control synthesis for discrete-time
  dynamical systems with disturbances.
\newblock {\em Automatica}, 47(7):1414--1426, 2011.

\bibitem{lasserre2015tractable}
J.~B. Lasserre.
\newblock Tractable approximations of sets defined with quantifiers.
\newblock {\em Mathematical Programming}, 151(2):507--527, 2015.

\bibitem{li2019robustly}
Y.~Li.
\newblock {\em Robustly complete temporal logic control synthesis for nonlinear
  systems}.
\newblock PhD thesis, University of Waterloo, 2019.

\bibitem{li2017invariance}
Y.~Li and J.~Liu.
\newblock Invariance control synthesis for switched nonlinear systems: An
  interval analysis approach.
\newblock {\em IEEE TAC}, 63(7):2206--2211, 2017.

\bibitem{liebenwein2020compositional}
L.~Liebenwein, W.~Schwarting, C.-I. Vasile, J.~DeCastro, J.~Alonso-Mora,
  S.~Karaman, and D.~Rus.
\newblock Compositional and contract-based verification for autonomous driving
  on road networks.
\newblock In {\em Robotics Research}, pages 163--181. Springer, 2020.

\bibitem{lygeros1999controllers}
J.~Lygeros, C.~Tomlin, and S.~Sastry.
\newblock Controllers for reachability specifications for hybrid systems.
\newblock {\em Automatica}, 35(3):349--370, 1999.

\bibitem{mitchell2007comparing}
I.~M. Mitchell.
\newblock Comparing forward and backward reachability as tools for safety
  analysis.
\newblock In {\em Intl. Workshop on HSCC}, pages 428--443. Springer, 2007.

\bibitem{mitchell2005time}
I.~M. Mitchell, A.~M. Bayen, and C.~J. Tomlin.
\newblock A time-dependent hamilton-jacobi formulation of reachable sets for
  continuous dynamic games.
\newblock {\em IEEE TAC}, 50(7):947--957, 2005.

\bibitem{montejano1996some}
L.~Montejano.
\newblock Some results about minkowski addition and difference.
\newblock {\em Mathematika}, 43:265--273, 1996.

\bibitem{sadraddini2019linear}
S.~Sadraddini and R.~Tedrake.
\newblock Linear encodings for polytope containment problems.
\newblock In {\em 58th CDC}, pages 4367--4372. IEEE, 2019.

\bibitem{smith2016interdependence}
S.~W. Smith, P.~Nilsson, and N.~Ozay.
\newblock Interdependence quantification for compositional control synthesis
  with an application in vehicle safety systems.
\newblock In {\em 55th CDC}, pages 5700--5707. IEEE, 2016.

\bibitem{yang2020efficient}
L.~Yang and N.~Ozay.
\newblock Efficient safety control synthesis with imperfect state information.
\newblock In {\em 59th CDC}, pages 874--880. IEEE, 2020.

\bibitem{yang2018comparison}
X.~Yang and J.~K. Scott.
\newblock A comparison of zonotope order reduction techniques.
\newblock {\em Automatica}, 95:378--384, 2018.

\end{thebibliography}

\end{document}